\newtheorem{theorem}{Theorem}
\newtheorem{corollary}{Corollary}
\newtheorem{example}{Example}
\newtheorem*{case*}{Case}
\theoremstyle{definition}
\newtheorem{definition}{Definition}
\newtheorem{condition}{Condition}
\newcommand\norm[1]{\left\lVert#1\right\rVert}
\newcommand\f[1]{\footnotesize{#1}}
\newcommand\s[1]{\scriptsize{#1}}
\newcommand{\cmmnt}[1]{}
\definecolor{myblue}{RGB}{0,0,100}
\definecolor{mygreen}{RGB}{0,0,100}
\definecolor{myyellow}{RGB}{0,100,100}
\definecolor{my1}{RGB}{100,0,100}
\definecolor{myred}{RGB}{100,0,0}
\definecolor{my2}{RGB}{0,100,0}
\newcommand\blfootnote[1]{%
  \begingroup
  \renewcommand\thefootnote{}\footnote{#1}%
  \addtocounter{footnote}{-1}%
  \endgroup
}
\newlength{\figurewidth}
\newlength{\smallfigurewidth}
\begin{document}

\title
{\large
\textbf{Functional Epsilon Entropy}
}
\author{%
Sourya Basu$^{\ast}$, Daewon Seo$^{\dag}$, and Lav R. Varshney$^{\ast}$\\
{\small\begin{minipage}{\linewidth}\begin{center}
\begin{tabular}{ccc}
$^{\ast}$Coordinated Science Laboratory, && $^{\dag}$Department of Electrical Engineering\\
Department of Electrical and&& University of Southern California\\
Computer Engineering && \{\url{daewonse}\}@usc.edu\\
University of Illinois at Urbana-Champaign && {}\\
\{\url{sourya, varshney}\}@illinois.edu && {}
\end{tabular}
\end{center}\end{minipage}}
}
\maketitle
\Section{Abstract}
We consider the problem of coding for computing with maximal distortion, where the sender communicates with a receiver, which has its own private data and wants to compute a function of their combined data with some fidelity constraint known to both agents. We show that the minimum rate for this problem is equal to the conditional entropy of a hypergraph and design practical codes for the problem. Further, the minimum rate of this problem may be a discontinuous function of the fidelity constraint. We also consider the case when the exact function is not known to the sender, but some approximate function or a class to which the function belongs is known and provide efficient achievable schemes.

\section{Introduction}\label{sec: introduction}
Consider the problem illustrated in Fig.~\ref{fig: Coding for computing with side information.} where the encoder observes $X \in \mathcal{X}$ and decoder observes $Y \in \mathcal{Y}$, and both the encoder and decoder want the decoder to compute $f(X,Y) \in \mathcal{Z}$ with a fidelity criterion for a given function $f : \mathcal{X} \times \mathcal{Y} \mapsto \mathcal{Z}$ known to both encoder and decoder, where $\mathcal{X}, \mathcal{Y}, \mathcal{Z}$ are all finite sets and $\mathcal{Z} \subset \mathbb{R}^d$ for some finite natural number $d$. The objective is to find the minimum number of bits the encoder must send such that the decoder can compute $f(X,Y)$ with a fidelity criterion $\epsilon \in [0,\infty)$, i.e.\ if $\widehat{f(X,Y)}$ is the estimate of $f(X,Y)$ obtained by the decoder, then the following should hold \blfootnote{This work was funded in part by the IBM-Illinois Center for Cognitive Computing Systems Research (C3SR), a research collaboration as part of the IBM AI Horizons Network; and in part by grant number 2018-182794 from the Chan Zuckerberg Initiative DAF, an advised fund of Silicon Valley Community Foundation.}
\begin{align}
\norm{f(X,Y) - \widehat{f(X,Y)}} \leq \epsilon,
\end{align}
where $\norm{z_1 - z_2}$ is the Euclidean distance between $z_1,z_2 \in \mathcal{Z}$.
We assume the function $f$ is to be computed for $N$ independent instances of $(X,Y)$ for large $N$.
\tikzstyle{decision} = [diamond, draw, 
    text width=4.5em, text badly centered, inner sep=0pt]
\tikzstyle{block} = [rectangle, draw,
    text width=5em, text centered, minimum height=2 em, node distance = 1.5 cm]
\tikzstyle{block_e} = [rectangle,
    text width=12em, text centered, minimum height=2 em, node distance = 0 cm]
\tikzstyle{block_L} = [rectangle, draw,
    text width=5em, text centered, minimum height=8em, node distance = 4.2 cm]
\tikzstyle{line} = [draw, -latex']
\tikzstyle{cloud} = [draw, ellipse, node distance=4cm,
    minimum height=2em]
\tikzstyle{rel line to}= [to path={|- +(\tikztotarget) \tikztonodes}]
\tikzstyle{virtual} = [coordinate]

 \begin{figure}
  \tikzstyle{c} = [rectangle, draw, dashed]    
\centering
 
\begin{tikzpicture}[auto]

    \node [virtual] (message_1) {};
    \node [virtual, below = of message_1] (message_v) {};
    \node [block, right = of message_1] (encoder_1) {Enc 1};
	\node [virtual, right = of message_v] (encoder_v) {};    
    \node [block, right = 1.5cm of encoder_1] (decoder) {Decoder};
    \node [virtual, below = of decoder] (message_2) {};
    \node [block_e, right = of decoder] (fidelity) {\vspace{10mm}\\$\norm{f(X,Y) - \widehat{f(X,Y)}} \leq \epsilon$};
    

	\draw[->] (message_1.east) -- node[] {$X$ }+(1.5cm,0pt);
	\draw[->] (message_2.west) -- node[] {$Y$ }+(0cm,1cm);
	\draw[->] (encoder_1.east) -- node[] {$m$} +(1.5cm,0pt);
	\draw[->] (decoder.east) -- node[] {$\widehat{f(X,Y)}$} +(3cm,0pt);
\end{tikzpicture}

\caption{Coding for computing with side information.}
\label{fig: Coding for computing with side information.}
\end{figure}
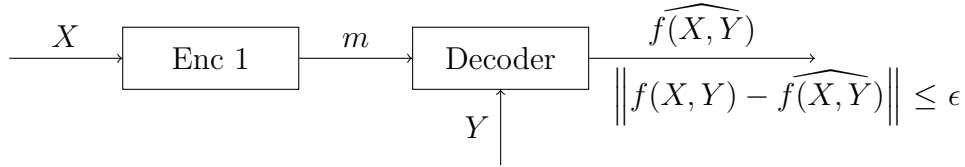

Orlitsky and Roche gave a single-letter characterization of the problem for $\epsilon = 0$ \cite{OrlitskyR2001}. Function computation with $\epsilon > 0$ has been considered in \cite{DoshiSMJ2010,FeiziM2014} which provide an efficient graph-based encoding scheme using a generalized version of the characteristic graph in \cite{OrlitskyR2001} called the $D$-characteristic graph. The coding scheme provided there is suboptimal since the construction of the graph is based on only pairwise comparison of function values. Drawing on better geometric methods of comparison, we introduce a novel generalization of the characteristic graph from \cite{OrlitskyR2001} that yields a hypergraph-based encoding scheme that is indeed optimal. Thus, we provide an alternate (but fully equivalent) description of the rate-distortion function, which further inspires practically implementable codes. We refer to the optimal rate for this problem as \emph{functional $\epsilon$-entropy} since this rate reduces to \emph{$\epsilon$-entropy} of $f$ defined in \cite{PosnerR1971} when there is no side information $Y$ and the hyperedges of the constructed characteristic hypergraph partition the support set $\mathcal{X}$.  Unlike traditional rate-distortion problems where the rate-distortion function $R(\epsilon)$ is a continuous function of $\epsilon$ (the constraint on the expected distortion), in this case we have a rate function that is discontinuous function of $\epsilon$ and the points of discontinuity can be determined from the characteristic hypergraph for different values of $\epsilon$. 

We also show that some of the assumptions in \cite{OrlitskyR2001} that lead to simple coding schemes for $\epsilon = 0$ might not imply the same when $\epsilon > 0$. Although \cite{DoshiSMJ2010, FeiziM2014} show modular schemes, i.e.\ graph-based quantization followed by source coding, are optimal for the problem with $\epsilon = 0$ under some assumptions, the solution provided is NP-hard and an approximate solution is used for coding. Further, the assumptions in \cite{DoshiSMJ2010, FeiziM2014, OrlitskyR2001} that lead to optimal modular schemes only involve the source distributions; one can weaken assumptions that imply optimal modular schemes by considering both the distribution of the source and the function. Although there does not seem to be a simple and elegant dichotomy of functions and sources analogous to Han and Kobayashi's dichotomy of functions \cite{HanK1987} under which modular schemes are optimal, we provide a simple and general class of function-source pairs for which modular schemes are optimal. 

After providing the rate-distortion function and an optimal hypergraph-based encoding scheme, we provide algorithms and conditions under which practical coding schemes using \emph{randomized quantization} and polar codes \cite{Arikan2009}  are optimal.

Sec.~\ref{sec_merged: notations and preliminaries} describes preliminary results and the problem model. Sec.~\ref{sec: graph} gives the main result of this paper, the coding theorem and its equivalence to the conditional entropy of a hypergraph. Sec.~\ref{sec: practical} develops practical coding schemes for the problem. Sec.~\ref{sec: properties} shows that functional $\epsilon$-entropy may be discontinuous in $\epsilon$ and Sec.~\ref{sec: conclusion} concludes the paper.

\section{Preliminaries and problem setting}\label{sec_merged: notations and preliminaries}
In this section, first we provide some definitions and discuss some preliminary results; then we formally define the problem.

\subsection{Preliminaries and notations}\label{sec: notations and preliminaries}
For a set of $k$ points $z^k = \{z_1, \ldots, z_k\}$, $z_i \in \mathcal{Z} \subset \mathbb{R}^d$ for some finite $d$, the smallest circle (sphere) enclosing these $k$ points is called the smallest enclosing circle of $z^k$ \cite{Chrystal1885}. Note that the computational complexity of finding the smallest enclosing circle for a set of points is linear in the number of points \cite{Megiddo1983}.

\begin{definition}
A function $f: \mathcal{X} \mapsto \mathcal{Z}$ is $L$-Lipschitz continuous if for all $x_1,x_2 \in \mathcal{X}$, $\norm{f(x_1) - f(x_2)} \leq L \norm{x_1 - x_2}$ for some $L>0$ where $\norm{\cdot}$ is the Euclidean norm.
\end{definition}
\begin{definition}
A function $g: \mathcal{X} \mapsto \hat{\mathcal{Z}}$, $\hat{\mathcal{Z}} \subset \mathbb{R}^d$ is a $\delta$-approximation to a function $f: \mathcal{X} \mapsto \mathcal{Z}$ if for every $x \in \mathcal{X}$, $\norm{f(x) - g(x)} \leq \delta$ where $\norm{\cdot}$ is the Euclidean norm.
\label{def: delta}
\end{definition}

A hypergraph $G$ is a pair $G = (\mathcal{X},E)$ where $\mathcal{X}$ is the set of vertices of $G$ and $E \subseteq \mathcal{P}(\mathcal{X}) \setminus \emptyset$ is the set of hyperedges of $G$, where $\mathcal{P}(\mathcal{X})$ is the powerset of $\mathcal{X}$ \cite{Bretto2013}. 

\begin{definition}
A hyperedge $E$ is called a \emph{maximal hyperedge} if $E$ is not a proper subset of any other hyperedge in the hypergraph $G$.
\end{definition}

Let $(W,X) \sim P_{WX}$ be i.i.d. random variables with $(W,X) \in \mathcal{W}\times \mathcal{X}$ and suppose $\mathcal{W} = \{0,1\}$ for simplicity. Then, source Bhattacharyya parameter \cite{Arikan2010} $Z(W|X)$ for the source $(W,X)$ is defined as $Z(W|X) = 2 \sum_{x} P_{X}(x) \sqrt{P_{W|X}(0|x) P_{W|X}(1|x)}$.

\begin{theorem}[\cite{HondaY2013}]
For any $\beta < \tfrac{1}{2}$, i.i.d.\ random variables $(W,X)$ and $U_1^N = W_1^N G_N$,
\begin{align*}
\lim_{N \to \infty} \frac{1}{N} \left \vert \{  i: Z(U_i|U_1^{i-1},X_1^N) \leq 2^{-N^{\beta}} \text{  and  } Z(U_i|U_1^{i-1}) \geq 1 - 2^{-N^{\beta}} \} \right \vert & = I(W;X),\\
\lim_{N \to \infty} \frac{1}{N} \left \vert \{  i: Z(U_i|U_1^{i-1},X_1^N) \geq 1 - 2^{-N^{\beta}} \text{  or  } Z(U_i|U_1^{i-1}) \leq 2^{-N^{\beta}} \} \right \vert & = 1 - I(W;X),
\end{align*}
where $G_N$ is the generator matrix for polar codes. 
\label{thm: polar_codes}
\end{theorem}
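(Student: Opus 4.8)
This is the polarization lemma behind polar source coding with side information (due to Honda and Yamamoto), so I will only sketch the argument; the plan is to reduce it to Ar\i kan's source polarization theorem and to the polarization-rate bound of Ar\i kan and Telatar, applied to the two Bhattacharyya families $Z(U_i\mid U_1^{i-1},X_1^N)$ and $Z(U_i\mid U_1^{i-1})$ at the same time, with $N=2^n$. First I would record the standard two-sided estimates relating the Bhattacharyya parameter of a binary variable to its conditional entropy, so that $Z\to 0$ iff $H\to 0$ and $Z\to 1$ iff $H\to 1$; this lets one move freely between the $Z$-formulation in the theorem and the conditional-entropy formulation. I would also use two elementary monotonicity facts: conditioning on more data cannot increase conditional entropy, so $H(U_i\mid U_1^{i-1},X_1^N)\le H(U_i\mid U_1^{i-1})$ for every $i$; and, by concavity of $(p,q)\mapsto\sqrt{pq}$ and Jensen, conditioning on more data cannot increase the Bhattacharyya parameter either, which gives the nesting $\{i:Z(U_i\mid U_1^{i-1})\le t\}\subseteq\{i:Z(U_i\mid U_1^{i-1},X_1^N)\le t\}$ for every threshold $t$.

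The second step is the weak polarization and the computation of the relevant densities. Picking a uniformly random coordinate is the same as picking a uniformly random infinite branch of the binary polarization tree, and along such a branch the conditional entropy of the selected coordinate given the earlier ones -- taken in the version $H_n$ that also conditions on $X_1^{2^n}$ and in the version $\widetilde H_n$ that does not -- is a bounded martingale, because one step of the polar transform conserves the average of the two children's conditional entropies. Hence both processes converge almost surely, and the standard one-step analysis of the polar recursion forces the limits into $\{0,1\}$; since $H_n\le\widetilde H_n$, the pair $(\widetilde H_\infty,H_\infty)$ can only be $(0,0)$, $(1,0)$ or $(1,1)$. Because $\{H_\infty=1\}=\{(\widetilde H_\infty,H_\infty)=(1,1)\}$, the martingale property gives $\P[(1,1)]=\mathbb E[H_\infty]=H(W\mid X)$, and then $\P[(1,0)]=\mathbb E[\widetilde H_\infty]-\mathbb E[H_\infty]=H(W)-H(W\mid X)=I(W;X)$ and $\P[(0,0)]=1-H(W)$. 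Converting branch probabilities into empirical densities over $i\in\{1,\dots,N\}$ gives $\tfrac1N|\{i:H(U_i\mid U_1^{i-1},X_1^N)\to 0\}|\to 1-H(W\mid X)$, $\tfrac1N|\{i:H(U_i\mid U_1^{i-1})\to 0\}|\to 1-H(W)$, and $\tfrac1N|\{i:H(U_i\mid U_1^{i-1})\to 1\}|\to H(W)$.

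The last step replaces these qualitative limits by the rate $2^{-N^{\beta}}$, $\beta<\tfrac12$, via the Ar\i kan--Telatar estimate: on the event (along the random branch) that a Bhattacharyya parameter tends to $0$, once it drops below a small constant the favorable branch squares it while the unfavorable branch at worst doubles it, so a Chernoff bound on the number of squaring steps in the last $\delta n$ levels makes it at most $2^{-N^{\beta}}$ with probability $1-o(1)$ for every $\beta<\tfrac12$; the companion statement toward $1$, needed for $Z(U_i\mid U_1^{i-1})$, is obtained by running the same estimate on the complementary parameter. This upgrades the three densities to $\tfrac1N|S_2|\to 1-H(W\mid X)$ with $S_2:=\{i:Z(U_i\mid U_1^{i-1},X_1^N)\le 2^{-N^{\beta}}\}$, $\tfrac1N|S_1|\to 1-H(W)$ with $S_1:=\{i:Z(U_i\mid U_1^{i-1})\le 2^{-N^{\beta}}\}$, and $\tfrac1N|\{i:Z(U_i\mid U_1^{i-1})\ge 1-2^{-N^{\beta}}\}|\to H(W)$. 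Since $S_1\subseteq S_2$ by the nesting, the difference $S_2\setminus S_1$ coincides, up to an $o(N)$ set of coordinates whose $Z(U_i\mid U_1^{i-1})$ has not yet polarized, with the ``and'' index set of the theorem; hence its density is $|S_2|/N-|S_1|/N\to (1-H(W\mid X))-(1-H(W))=I(W;X)$, and the complement of the ``and'' set is, again up to an $o(N)$ set, the ``or'' index set, with density $1-I(W;X)$. The hard part is precisely this simultaneity: the Ar\i kan--Telatar rate has to hold for \emph{both} Bhattacharyya families along the \emph{same} random branch, and it is the nesting (equivalently $Z(W\mid X)\le Z(W)$) that keeps the two fast-polarization events aligned, so that imposing the rate on both parameters does not erode the density $I(W;X)$ of the ``good'' coordinates.
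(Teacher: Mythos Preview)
The paper does not prove this theorem at all: it is stated in the preliminaries with a citation to Honda and Yamamoto and is used as a black box for the practical coding scheme in Sec.~\ref{subsec: prob_quant_polar}. So there is no ``paper's own proof'' to compare against.

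That said, your sketch is a correct outline of the standard argument and is essentially the route taken in the cited reference: couple the two Bhattacharyya processes $Z(U_i\mid U_1^{i-1})$ and $Z(U_i\mid U_1^{i-1},X_1^N)$ on the same random branch of the polarization tree, use the martingale/weak-polarization step to identify the limiting joint law $(\widetilde H_\infty,H_\infty)\in\{(0,0),(1,0),(1,1)\}$ with masses $1-H(W)$, $I(W;X)$, $H(W\mid X)$, and then upgrade to the rate $2^{-N^\beta}$ via the Ar{\i}kan--Telatar fast-polarization estimate. Your handling of the ``and'' versus ``or'' sets via the nesting $S_1\subseteq S_2$ and the $o(N)$ unpolarized residual is also the right bookkeeping. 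In short: your proposal is sound, and it simply supplies what the present paper deliberately omits by citation.
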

We will use polar coding to build practical coding techniques for our problem. Thm.~\ref{thm: polar_codes} can be extended to any finite set $\mathcal{W}$ using results from \cite{SasogluTA2009}.

\subsection{Problem setting}\label{sec: problem_Setup}
Let $(X_i,Y_i) \sim P_{X,Y}$ be $N$ i.i.d.\ random variables, where $X_i \in \mathcal{X}$, $Y_i \in \mathcal{Y}$. The encoder in Fig.~\ref{fig: Coding for computing with side information.} observes $\{X_i\}_{i=1}^N$, the decoder observes $\{Y_i\}_{i=1}^N$, and both the encoder and decoder want the decoder to reconstruct $\{f(X_i,Y_i)\}_{i=1}^{N}$ as $\{\hat{Z}\}_{i=1}^{N}$ such that $\frac{1}{N}\sum_{i=1}^{N}\Pr \left[\norm{\hat{Z}_i - f(X_i,Y_i)} > \epsilon\right] \to 0$ as $N \to \infty$ for some fixed fidelity constraint $\epsilon > 0$.
For any $R>0$, we define a $(2^{NR},N, \epsilon)$ code for any fixed function $f: \mathcal{X}\times \mathcal{Y} \to \mathcal{Z}$ as an encoding function $g_e: \mathcal{X}^N \mapsto \{1,\ldots, 2^{NR}\}$ and a decoding function $g_d: \{1,\ldots, 2^{NR}\} \times \mathcal{Y}^N \mapsto \hat{\mathcal{Z}}^N$ where $\hat{\mathcal{Z}} \subset \mathbb{R}^d$ is the reconstruction set. The probability of error is 
\[
P_{\epsilon}^{avg}(\hat{Z}, X, Y) = \frac{1}{N}\sum_{i=1}^{N}\Pr \left[\norm{\hat{Z}_i - f(X_i,Y_i)} > \epsilon\right],
\]
that is, $P_{\epsilon}^{avg}(\hat{Z}, X, Y)$ is the average symbol-error probability. A rate $R$ is achievable if there exists a sequence of $(2^{NR},N, \epsilon)$ codes such that $P_{\epsilon}^{avg}(\hat{Z}, X, Y) \to 0$ as $N \to \infty$. The goal is to find the minimum achievable value of $R$ and design practical codes that attain it. 

\section{Coding theorem and hypergraph-based coding scheme}\label{sec: graph}

In this section we first provide the rate-distortion function for the problem described in Sec.~\ref{sec_merged: notations and preliminaries}, and then provide a hypergraph-based coding scheme that achieves the rate-distortion function.
\begin{theorem}
Let $(X,Y) \sim P_{XY}$. The encoder and decoder observe $X$ and $Y$ respectively. The decoder estimates the function $f(X,Y)$ as $\widehat{f(X,Y)}$ such that for some fixed $\epsilon > 0$, $P_{\epsilon}^{avg}(\widehat{f(X,Y)}, X, Y) \to 0$ as $N \to \infty$. Then the minimum rate required by the encoder is
\begin{align}
R(\epsilon) = \min_{U-X-Y } I(U;X|Y) \label{eqn: R_Delta}
\end{align}
such that there exists a function $g$ with $\mathbb{E}d_{\epsilon}(X,Y,g(U,Y)) \leq 0$, where the distortion function $d_{\epsilon}$ is defined as $d_{\epsilon}(x,y,z) = \mathbbm{1} \{ \norm{z - f(x,y)} > \epsilon \}.$
\label{thm: I:coding theorem}
\end{theorem}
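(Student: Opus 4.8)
The plan is to recognize (\ref{eqn: R_Delta}) as the Wyner--Ziv rate--distortion function for the source $(X,Y)$ under the bounded distortion $d_\epsilon$, evaluated at distortion level $0$, and to prove it by the usual direct/converse pair, adapted to the fact that the operational criterion is a \emph{vanishing} average per-symbol excess-distortion probability rather than a fixed positive distortion. Two preliminary reductions keep all relevant alphabets finite. First, $d_\epsilon(x,y,z)$ depends on $z$ only through the coverage pattern $\{(x,y):\norm{z-f(x,y)}\le\epsilon\}$, and there are finitely many such patterns, so $g$ may be taken to range over a fixed finite set of candidate reconstruction points (one representative $z$ per attainable pattern; these can be taken to be circumcenters of subsets of $\{f(x,y)\}$, which is also what links the statement to the hypergraph picture of Sec.~\ref{sec: graph}). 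Second, by the support/Carath\'eodory lemma for Wyner--Ziv-type problems it suffices in (\ref{eqn: R_Delta}) to consider $|\mathcal{U}|\le|\mathcal{X}|+1$, so the feasible set of laws $P_{UXY}$ (with $U-X-Y$ and the prescribed $(X,Y)$-marginal) is compact, $I(U;X|Y)$ is continuous on it, the minimum is attained, and it is finite since $U=X$, $g=f$ is feasible with rate $H(X|Y)$.

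For achievability I would run the standard Wyner--Ziv scheme for a minimizing pair $(U,g)$: generate $\approx 2^{N(I(U;X)+\delta)}$ i.i.d.\ codewords $u^N\sim\prod P_U$, randomly bin them into $\approx 2^{N(I(U;X|Y)+2\delta)}$ bins; the encoder transmits the bin index of a $u^N$ jointly typical with $x^N$ (the covering lemma kills the ``no such codeword'' event), and the decoder, using $y^N$, recovers the unique codeword in that bin jointly typical with $y^N$ (the packing lemma, with $I(U;X)-I(U;X|Y)=I(U;Y)$, kills the decoding-error event), then outputs $\hat z_i=g(\hat u_i,y_i)$. On the complement of those two vanishing-probability events, $(u^N,x^N,y^N)$ is robustly typical for $P_{UXY}$ and hence has no symbol triple outside $\mathrm{supp}(P_{UXY})$; since $\mathbb{E}\,d_\epsilon(X,Y,g(U,Y))\le 0$ with $d_\epsilon\ge 0$ forces $d_\epsilon\equiv 0$ on that support, the empirical excess-distortion count is exactly $0$ there, so $P_\epsilon^{avg}\le\Pr[\text{bad events}]\to 0$. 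Letting $\delta\downarrow 0$ and minimizing over feasible $(U,g)$ shows every rate exceeding the value in (\ref{eqn: R_Delta}) is achievable.

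For the converse, take any $(2^{NR},N,\epsilon)$ code with $D_N:=P_\epsilon^{avg}\to 0$ and single-letterize: $NR\ge H(M\mid Y^N)\ge I(X^N;M\mid Y^N)=\sum_{i=1}^{N}I(X_i;M\mid Y^N,X^{i-1})$. Using that the pairs are i.i.d., $H(X_i\mid Y^N,X^{i-1})=H(X_i\mid Y_i)$, and with $U_i:=(M,X^{i-1},Y^{i-1},Y_{i+1}^N)$ this term equals $I(X_i;U_i\mid Y_i)$; one checks $Y_i\perp U_i\mid X_i$ (since $U_i$ is a function of $X_i$, $X^{\neq i}$, $Y^{\neq i}$ and the pairs are i.i.d.), so $U_i-X_i-Y_i$, and the $i$-th decoder output is a function of $(M,Y^N)$, hence of $(U_i,Y_i)$, defining $g^{(i)}$ with $\mathbb{E}\,d_\epsilon(X_i,Y_i,g^{(i)}(U_i,Y_i))=D_i$ and $\tfrac1N\sum_i D_i=D_N$. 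Introducing a uniform time-sharing index $Q$ and setting $U:=(U_Q,Q)$, $X:=X_Q$, $Y:=Y_Q$, $g((\tilde u,i),y):=g^{(i)}(\tilde u,y)$ yields $U-X-Y$ with the correct $(X,Y)$-marginal, $\mathbb{E}\,d_\epsilon(X,Y,g(U,Y))=D_N$, and $R\ge\tfrac1N\sum_i I(X_i;U_i\mid Y_i)=I(U;X\mid Y)$.

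Finally I would close the gap between ``$\le D_N$'' and ``$\le 0$'', which is the step I expect to be the real work. After imposing $|\mathcal{U}|\le|\mathcal{X}|+1$ and the finite reconstruction alphabet from the preliminary reductions, each $N$ supplies a pair $(P^{(N)}_{UXY},g^{(N)})$ in a compact set with $I_{P^{(N)}}(U;X\mid Y)\le R$ and $\mathbb{E}_{P^{(N)}}\,d_\epsilon(X,Y,g^{(N)}(U,Y))\le D_N$; extracting a subsequence along which $P^{(N)}\to P^{\ast}$ and $g^{(N)}\equiv g^{\ast}$ (only finitely many $g$'s), continuity of mutual information and of $P\mapsto\mathbb{E}_P[d_\epsilon(\cdot)]$ (boundedness of $d_\epsilon$), together with $D_N\to 0$, produces a limiting feasible pair with $\mathbb{E}\,d_\epsilon\le 0$ and $I(U;X\mid Y)\le R$, whence $R\ge R(\epsilon)$. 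Equivalently, this argument shows the Wyner--Ziv function $D\mapsto R_{WZ}(D)$ for $d_\epsilon$ is lower semicontinuous at $D=0$, so that driving the operational distortion to $0$ cannot beat the $D=0$ value in (\ref{eqn: R_Delta}); some care is warranted here precisely because the analogous continuity in the \emph{other} parameter $\epsilon$ fails (Sec.~\ref{sec: properties}).
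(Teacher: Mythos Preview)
Your argument is correct. The paper itself does not give a proof: it simply remarks that the statement ``follows from \cite[Eq.~(4)]{OrlitskyR2001} when the distortion is set to zero under the $d_\epsilon$ distortion function,'' i.e., it invokes the Orlitsky--Roche/Wyner--Ziv rate--distortion theorem with the indicator distortion $d_\epsilon$ at target level $0$. Your proposal is exactly a self-contained derivation of that same result---random covering plus binning for achievability, the standard single-letterization with $U_i=(M,X^{i-1},Y^{\setminus i})$ and time sharing for the converse---so the underlying route is the same; you are supplying the details the paper outsources to the citation.

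The one substantive addition in your write-up is the explicit passage from $\mathbb{E}\,d_\epsilon\le D_N\to 0$ in the converse to a feasible $(U,g)$ with $\mathbb{E}\,d_\epsilon\le 0$. You do this by first cutting down to finite reconstruction and auxiliary alphabets (via the coverage-pattern reduction and the support lemma) and then extracting a convergent subsequence; this is a clean and correct way to handle it. An equivalent, slightly shorter alternative is to note that with $|\mathcal U|\le|\mathcal X|+1$ and bounded $d_\epsilon$, the Wyner--Ziv function $D\mapsto R_{WZ}(D)$ for $d_\epsilon$ is convex, nonincreasing, and finite at $D=0$ (since $U=X$, $g=f$ works), hence right-continuous at $0$, which immediately gives $\liminf_N R_{WZ}(D_N)\ge R_{WZ}(0)=R(\epsilon)$. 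Either way, the step is real and you are right to flag it; the paper's one-line appeal to \cite{OrlitskyR2001} implicitly relies on it.

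One small comment on presentation: in the achievability paragraph you jump from ``covering'' and ``packing'' directly to $(u^N,x^N,y^N)$ being robustly typical for $P_{UXY}$; strictly, that joint typicality comes from the Markov lemma (using $U-X-Y$ and typicality of $(x^N,y^N)$), which you might name explicitly since the zero-distortion conclusion leans on the ``no symbols outside the support'' property of strong typicality.
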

The proof to this theorem is direct and follows from \cite[Eq.\ (4)]{OrlitskyR2001} when the distortion is set to zero under the $d_{\epsilon}$ distortion function.

We define the \emph{$\epsilon$-characteristic hypergraph}, $G_{f, \epsilon;X|Y}$, of a random variable $X$ with respect to another possibly correlated random variable $Y$, a function $f$, and a fidelity constraint $\epsilon$.

\begin{definition}
The vertex set of \emph{$\epsilon$-characteristic hypergraph}, $G_{f,\epsilon;X|Y}$, is $\mathcal{X}$. For any non-empty subset $S \subseteq \mathcal{X}$ and $y \in \mathcal{Y}$, let $S_y = \{x:x\in S \hspace{2mm}\text{and} \hspace{2mm} p(x,y) > 0\}$. Then $S$ is a hyperedge in $G_{f,\epsilon;X|Y}$ if and only if the radius of the smallest enclosing circle containing the set of points $\{f(x,y): x \in S_y\}$ is less than or equal to $\epsilon$ for all $y \in \mathcal{Y}$.
\label{def: graph}
\end{definition}
Note that for $\epsilon = 0$ the hypergraph in Def.~\ref{def: graph} reduces to the characteristic graph defined in \cite{OrlitskyR2001, Witsenhausen1975} with hyperedges replaced by independent sets. Now we define the hypergraph entropy of a characteristic hypergraph $G_{f,\epsilon;X|Y}$. Let $\Gamma(G_{f,\epsilon;X|Y})$ be the set of hyperedges of $G_{f,\epsilon;X|Y}$. When it is clear from context, we will denote $G_{f,\epsilon;X|Y}$ by $G_{\epsilon}$, and $G_{0}$ is simply written as $G$. We define the functional $\epsilon$-entropy, which is a generalization of $\epsilon$-entropy proposed by \cite{PosnerR1971}.
\begin{definition}
The \emph{functional $\epsilon$-entropy}, $H_{G_{\epsilon}}(X|Y)$, is defined as
\begin{align}
H_{G_{\epsilon}}(X|Y) = \min_{\substack{W-X-Y \\ X \in W \in \Gamma (G_{\epsilon})}} I(W;X|Y),
\label{eqn: graph_entropy}
\end{align}
where $X$ induces a probability distribution over the vertices of the hypergraph $G_{\epsilon}$. The random variable $W$ is obtained by defining transition probabilities $p(w|x)$ over all hyperedges $w$ that contain $x$, i.e.\ $p(w|x) \geq 0$ for all $x \in w \in \Gamma(G_{\epsilon})$ and  $\sum_{w \ni x}p(w|x) = 1$.
\end{definition}
Note that the minimization over $\Gamma (G_{\epsilon})$ can be restricted to $\Gamma_m (G_{\epsilon})$ by the data processing inequality, where $\Gamma_m (G_{\epsilon})$ is the set of maximal hyperedges. Now we show that the optimal rate $R(\epsilon) = H_{G_{\epsilon}}(X|Y)$.
\begin{theorem}
Let $R({\epsilon})$ and $H_{G_{\epsilon}}(X|Y)$ be as defined in \eqref{eqn: R_Delta} and \eqref{eqn: graph_entropy} respectively for some $\epsilon \geq 0$, then $R({\epsilon}) = H_{G_{\epsilon}}(X|Y). $
\label{thm: 2Delta}
\end{theorem}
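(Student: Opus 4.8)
The plan is to prove the two inequalities $R(\epsilon)\le H_{G_\epsilon}(X|Y)$ and $R(\epsilon)\ge H_{G_\epsilon}(X|Y)$ separately, in each direction converting an auxiliary random variable feasible for one optimization into one feasible for the other. The workhorse observation is that $d_\epsilon$ is $\{0,1\}$-valued, so the constraint $\mathbb{E}d_\epsilon(X,Y,g(U,Y))\le 0$ in \eqref{eqn: R_Delta} is equivalent to the pointwise statement $\norm{g(u,y)-f(x,y)}\le\epsilon$ for every $(x,y,u)$ with $p(x,y)>0$ and $p(u\mid x)>0$ (the Markov chain $U-X-Y$ makes the joint law of $(X,Y,U)$ equal to $p(x,y)p(u\mid x)$).

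For achievability, let $W$ attain the minimum in \eqref{eqn: graph_entropy}, so that $W-X-Y$, $X\in W\in\Gamma(G_\epsilon)$ almost surely, and $I(W;X|Y)=H_{G_\epsilon}(X|Y)$; take $U=W$. For each hyperedge $w$ and each $y\in\mathcal{Y}$, let $g(w,y)$ be the center of the smallest enclosing circle of $\{f(x,y):x\in w_y\}$, a point of $\mathbb{R}^d$. By Def.~\ref{def: graph} this circle has radius at most $\epsilon$, so $\norm{g(w,y)-f(x,y)}\le\epsilon$ whenever $x\in w$ and $p(x,y)>0$. Since $X\in W$ and $p(X,Y)>0$ almost surely, $d_\epsilon(X,Y,g(U,Y))=0$ almost surely, so $(U,g)$ is feasible for \eqref{eqn: R_Delta} and $R(\epsilon)\le I(U;X|Y)=H_{G_\epsilon}(X|Y)$.

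For the converse, fix any feasible pair $(U,g)$ for \eqref{eqn: R_Delta}. For each value $u$ of $U$ with $p(u)>0$, put $S_u=\{x\in\mathcal{X}:p(u\mid x)>0\}$, which is non-empty because $X\in S_U$ almost surely. For any $y$, every point of $\{f(x,y):x\in (S_u)_y\}$ lies within distance $\epsilon$ of $g(u,y)$, hence inside the radius-$\epsilon$ ball about $g(u,y)$; therefore the smallest enclosing circle of this set has radius at most $\epsilon$, and $S_u\in\Gamma(G_\epsilon)$. Set $W=S_U$. Being a deterministic function of $U$, $W$ satisfies $W-U-X-Y$ and a fortiori $W-X-Y$; also $X\in W$ almost surely, and the induced kernel $p(w\mid x)=\sum_{u:\,S_u=w}p(u\mid x)$ is nonnegative, sums to one in $w$, and is positive only when $x\in w$ --- exactly the structural requirement in the definition of $H_{G_\epsilon}(X|Y)$. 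Hence $W$ is feasible for \eqref{eqn: graph_entropy}, and since $W$ is a function of $U$ the chain rule gives $I(U;X|Y)=I(U,W;X|Y)=I(W;X|Y)+I(U;X|W,Y)\ge I(W;X|Y)\ge H_{G_\epsilon}(X|Y)$. Minimizing the left side over feasible $(U,g)$ gives $R(\epsilon)\ge H_{G_\epsilon}(X|Y)$.

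The geometric step (a radius-$\epsilon$ ball enclosing a point set forces the smallest enclosing circle to have radius $\le\epsilon$) and the data-processing step are routine. The point that deserves the most care --- and which I expect to be the main obstacle in writing the argument cleanly --- is checking in the converse that $W=S_U$ really lands in the feasible set of \eqref{eqn: graph_entropy}: one must confirm not just that each $S_u$ is a hyperedge, but that the induced transition probabilities $p(w\mid x)$ obey ``$p(w\mid x)>0\Rightarrow x\in w$'', and that zero-probability realizations of $U$ and pairs $(x,y)$ with $p(x,y)=0$ do not interfere with the almost-sure statements.
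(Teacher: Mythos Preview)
Your proposal is correct and follows essentially the same approach as the paper: in both directions you construct the same auxiliary variable (for achievability, $g(w,y)$ is the center of the smallest enclosing circle of $\{f(x,y):x\in w_y\}$; for the converse, $W$ is the deterministic function $u\mapsto\{x:p(u,x)>0\}$ of $U$), and your explicit chain-rule argument $I(U;X|Y)=I(W;X|Y)+I(U;X|W,Y)\ge I(W;X|Y)$ together with the observation that $W=h(U)$ and $U-X-Y$ imply $W-X-Y$ is exactly what the paper obtains by citing \cite[Thm.~2]{OrlitskyR2001}. The only cosmetic difference is your use of $p(u\mid x)$ rather than $p(u,x)$ in defining $S_u$, which is immaterial once $\mathcal{X}$ is the support of $X$.
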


\begin{proof}
From the definitions of $R(\epsilon)$ and $H_{G_{\epsilon}}(X)$ we need to prove that 
\begin{align*}
\min_{\substack{U-X-Y \\ \exists g: \hspace{0.5em}\mathbb{E}d_{\epsilon}(X,Y,g(U,Y)) \hspace{0.1em} \leq \hspace{0.1em}0}} I(U;X|Y) = \min_{\substack{W-X-Y \\ X \in W \in \Gamma (G_{\epsilon})}} I(W;X|Y).
\end{align*}
First we show the left side is less than or equal to the right side.
If $X \in W \in \Gamma (G_{\epsilon})$, then we can find a (partial) function $g$ over $\Gamma (G_{\epsilon}) \times \mathcal{Y}$ such that $\norm{g(w,y) -  f(x,y)} \leq \epsilon$ whenever $p(w,x,y) > 0$, and thus implying $\mathbb{E}d_{\epsilon}(X,Y,g(W,Y)) = 0$.

Let $w \in \Gamma (G_{\epsilon})$ and $y \in \mathcal{Y}$. If $p(x,y) = 0$ for all $x \in w$, then we can leave $g$ undefined since it will not affect our expected distortion. Otherwise form the set $w_y$ which consists of all $x \in w$ such that $p(x,y) > 0$ and define $g(w,y)$ as the center of the smallest enclosing circle of the set $\{f(x,y): x \in w_y\}$. Then, by Def.~\ref{def: graph}, $\norm{ f(x,y) - g(w,y)} \leq \epsilon$ for all $x \in w$ such that $p(x,y) > 0$. Hence, whenever $p(w,x,y) > 0$, we have $d_{\epsilon}(x,y,g(w,y)) = 0$. This shows that $\mathbb{E}d_{\epsilon}(X,Y,g(W,Y)) = 0$, hence the left side is less than or equal to the right side. 

Next we show that the right hand side is less than or equal to the left hand side, completing the proof.
Suppose $U-X-Y$ and there exists a $g$ such that $\mathbb{E}d_{\epsilon}(X,Y,g(U,Y)) \leq 0$. We define $W$ such that $X \in W \in \Gamma (G_{\epsilon})$ and show that $I(W;X|Y) \leq I(U;X|Y)$ for this definition. Let $p(u,x,y)$ be the probability distribution underlying $(U,X,Y)$. Set
\begin{equation}
w(u) = \{x: p(u,x) > 0\}
\label{eqn: w_def}
\end{equation}
and define the Markov chain $W-U-XY$ by
\begin{align*}
p(w|u,x,y) = 
\begin{cases}
1, & \text{if $w = w(u)$}\\
0, & \text{otherwise.}
\end{cases}
\end{align*}
We first show that $X \in W \in \Gamma (G_{\epsilon})$. If $p(w,x)>0$, this implies there is a $u$ such that $w = w(u)$ and $p(u,x) > 0$. Then, by \eqref{eqn: w_def} we have $x \in w = w(u)$. Thus, we have $x \in w$ whenever $p(w,x) > 0$. Next we show that whenever $p(w) > 0$, then the radius of the smallest enclosing circle of the set $\{f(x,y): x \in w \hspace{2mm} \text{and} \hspace{2mm} p(x,y) > 0\}$ is less than or equal to $\epsilon$ which further implies $W \in \Gamma (G_{\epsilon})$. From \eqref{eqn: w_def}, $p(w) > 0$ implies there exists a $u$ such that $w = w(u)$. Further, if $x \in w$ then $p(u,x) > 0$, and since $U-X-Y$ forms a Markov chain, it follows that whenever $p(x,y) > 0$ we have $p(u,x,y) > 0$. Note that $\mathbb{E}d_{\epsilon}(X,Y,g(U,Y)) = 0$, hence we must have $\norm{f(x,y) -  g(u,y)} \leq \epsilon$ whenever $p(u,x,y) > 0$. Thus, it follows that the circle centered at $g(u,y)$ of radius $\epsilon$ encloses all the points in the set $\{f(x,y): x \in w \hspace{2mm} \text{and} \hspace{2mm} p(x,y) > 0\}$. Hence, the smallest enclosing circle of the set $\{f(x,y): x \in w \hspace{2mm} \text{and} \hspace{2mm} p(x,y) > 0\}$ has radius less than or equal to $\epsilon$, which implies $w \in \Gamma (G_{\epsilon})$.

It remains to show that $W-X-Y$ forms a Markov chain and that $I(W;X|Y) \leq I(U;X|Y)$. The proof to this follows directly from the proof of \cite[Thm.~2]{OrlitskyR2001}.
\end{proof}

Note that setting $\epsilon = 0$, gives us the rate-distortion function in \cite{OrlitskyR2001}. 

\section{Towards practical coding scheme}\label{sec: practical}
In \cite{OrlitskyR2001} it was shown that if $p(x,y)>0$ for all $(x,y) \in \mathcal{X} \times \mathcal{Y}$, then every vertex $x$ in the characteristic hypergraph $G$ belongs to exactly one hyperedge of $G$. This property led to the design of optimal modular schemes in \cite{DoshiSMJ2010,FeiziM2014} under some assumptions introduced therein. However, those assumptions depended solely on the sources and not on the function considered. In this section, we introduce a function/source condition and show that for $\epsilon = 0$, this implies non-overlapping clustering of vertices leading to a modular scheme that can be implemented in $\mathcal{O}(N\log{N})$ time where $N$ is the blocklength. Then, by giving a counterexample we show that for $\epsilon >0$, $p(x,y)>0$ for all $(x,y) \in \mathcal{X} \times \mathcal{Y}$ does not imply non-overlapping clustering  of vertices in $G_{\epsilon}$ in contrast to the case $\epsilon = 0$ where this condition on the source implies non-overlapping clustering.

\begin{condition}
For any $y \in \mathcal{Y}$ and $x,x' \in \mathcal{X}$, if $f(x,y) \neq f(x',y)$, then either $p(x,y) = p(x',y) = 0$ or $p(x,y)>0, p(x',y)>0$.
\label{cond: 1}
\end{condition} 
Note that Cond.~\ref{cond: 1} encompasses the cases $p(x,y) > 0$ for all $(x,y) \in \mathcal{X}\times \mathcal{Y}$ and $p(x,y) = p(x) p(y)$ for all $(x,y) \in \mathcal{X}\times \mathcal{Y}$, where $\mathcal{X}$ and $\mathcal{Y}$ are the support set of $X$ and $Y$ respectively. The main idea is that whenever Cond.~\ref{cond: 1} holds, each $x \in \mathcal{X}$ belongs to a unique maximal hyperedge in $\Gamma(G)$. Hence quantization followed by entropy coding attains the optimal rate. Consider the following example that illustrates Cond.~\ref{cond: 1} and each vertex belongs to exactly one maximal hyperedge even though $p(x,y) = 0$ for some $(x,y) \in \mathcal{X}\times \mathcal{Y}$.

\begin{example}
Consider the random variables $(X,Y) \in \mathcal{X} \times \mathcal{Y}$ distributed as $P_{XY}$ in Fig.~\ref{fig: example_1}.\  a and let $f(x,y)$ in Fig.~\ref{fig: example_1}.\  b be the corresponding function, where $\mathcal{X} = \mathcal{Y} = \{1,2,3\}$. The hypergraph formed in this case is shown in Fig.~\ref{fig: example_1}.\ c.
\label{example: assumption}
\end{example}

\begin{figure}
\centering
 \begin{tabular}{lrrr}
	\toprule
	\f{$X\backslash Y$} & \f{1} & \f{2} & \f{3} \\
	\midrule
	\f{1} & \f{ $\tfrac{1}{7}$ } & \f{ $\tfrac{1}{7}$ }  & \f{0}\\ 
	\f{2} & \f{ $\tfrac{1}{7}$ }  & \f{ $\tfrac{1}{7}$ }  & \f{$\tfrac{1}{7}$}\\
	\f{3} & \f{ $\tfrac{1}{7}$ }  & \f{ $\tfrac{1}{7}$ }  & \f{0} \\
	\bottomrule
\end{tabular} \ 
\hspace{0.5cm}
\begin{tabular}{lrrr}
	\toprule
	\f{$X\backslash Y$} & \f{1} & \f{2} & \f{3} \\
	\midrule
	\f{1} & \f{1} & \f{1} & \f{1} \\
	\f{2} & \f{1} & \f{0} & \f{1}\\
	\f{3} & \f{1} & \f{0} & \f{1} \\
	\bottomrule
\end{tabular} \
\hspace{0.3cm}
 \includegraphics[width=60mm]{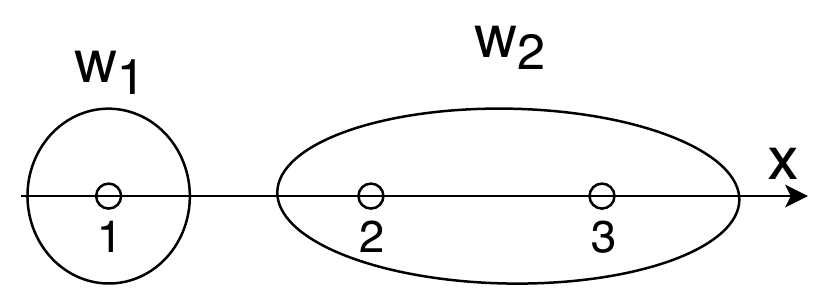} \\ \vspace{1mm}
\hspace{-1.5cm} (a) \hspace{3.5cm} (b) \hspace{4.5cm} (c)
\caption{(a) Probability distribution for $(X,Y)$. (b) Function $f(X,Y)$. (c) Corresponding hypergraph $G_0$.}
\label{fig: example_1}
\end{figure}
\subsection{Modular schemes}\label{I: practical_modular}
We show that whenever Cond.~\ref{cond: 1} holds, any $x \in \mathcal{X}$ belongs to exactly one maximal hyperedge $w$ in $\Gamma(G)$.

\begin{theorem}
If Cond.~\ref{cond: 1} holds, then for any $x \in \mathcal{X}$, if $x \in w_1, w_2$ for $w_1, w_2 \in \Gamma_m(G)$, then $w_1 = w_2$.
\label{thm: unique_clustering}
\end{theorem}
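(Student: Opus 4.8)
The plan is to show that, under Cond.~\ref{cond: 1}, the binary relation on $\mathcal{X}$ defined by ``$\{x,x'\}$ is a hyperedge of $G$'' is an equivalence relation; the maximal hyperedges are then exactly its equivalence classes, which form a partition of $\mathcal{X}$, and the uniqueness claim is immediate.

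First I would unwind Def.~\ref{def: graph} at $\epsilon=0$: the smallest enclosing circle of a point set has radius $0$ precisely when that set is a single point (or empty), so a non-empty $S\subseteq\mathcal{X}$ is a hyperedge of $G=G_0$ if and only if, for every $y\in\mathcal{Y}$, all $x\in S$ with $p(x,y)>0$ share the same value $f(x,y)$. A key consequence is locality: $S$ is a hyperedge if and only if $\{x,x'\}$ is a hyperedge for every pair $x,x'\in S$. Accordingly, define $x\approx x'$ to mean that $\{x,x'\}$ is a hyperedge, i.e.\ for every $y$, $p(x,y)>0$ and $p(x',y)>0$ together imply $f(x,y)=f(x',y)$. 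Reflexivity and symmetry of $\approx$ are immediate.

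The heart of the argument is transitivity of $\approx$, and this is exactly where Cond.~\ref{cond: 1} enters. Suppose $x\approx x'$ and $x'\approx x''$ but, for contradiction, that there is a $y$ with $p(x,y)>0$, $p(x'',y)>0$, and $f(x,y)\neq f(x'',y)$. I would split on $p(x',y)$. If $p(x',y)>0$, then $x\approx x'$ forces $f(x,y)=f(x',y)$ and $x'\approx x''$ forces $f(x',y)=f(x'',y)$, contradicting $f(x,y)\neq f(x'',y)$. If $p(x',y)=0$, I would invoke Cond.~\ref{cond: 1}: if $f(x,y)\neq f(x',y)$, then since $p(x,y)>0$ Cond.~\ref{cond: 1} (applied to $x,x'$ at $y$) forces $p(x',y)>0$, a contradiction; hence $f(x,y)=f(x',y)$, so $f(x',y)=f(x,y)\neq f(x'',y)$, and now Cond.~\ref{cond: 1} applied to $x',x''$ at $y$ (using $p(x'',y)>0$) forces $p(x',y)>0$, again a contradiction. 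In all cases we reach a contradiction, so $f(x,y)=f(x'',y)$ whenever $p(x,y),p(x'',y)>0$, i.e.\ $x\approx x''$.

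Finally I would assemble the pieces. By locality every hyperedge of $G$ is a set of pairwise $\approx$-related elements, hence by transitivity contained in a single $\approx$-class; conversely every $\approx$-class is itself a hyperedge (again by locality). Therefore the maximal hyperedges of $G$ are precisely the $\approx$-equivalence classes, and these partition $\mathcal{X}$. In particular, if $x\in w_1$ and $x\in w_2$ with $w_1,w_2\in\Gamma_m(G)$, then each of $w_1,w_2$ equals the $\approx$-class of $x$, so $w_1=w_2$. The only real obstacle is the transitivity step: without Cond.~\ref{cond: 1} the intermediate symbol $x'$ can fail to lie in the support at the offending $y$, which breaks the chain $f(x,y)=f(x',y)=f(x'',y)$; Cond.~\ref{cond: 1} is exactly what excludes this.
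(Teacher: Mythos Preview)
Your proof is correct. Both your argument and the paper's hinge on the same key step: Cond.~\ref{cond: 1} ensures that the pairwise relation ``$\{x,x'\}$ is a hyperedge of $G_0$'' is transitive, by forcing the middle vertex back into the support at any offending $y$. The paper argues this more directly---taking $x\in w_1\cap w_2$, $x'\in w_1\setminus w_2$, $x''\in w_2\setminus w_1$ and showing $\{x',x''\}$ is a hyperedge via the common element $x$---while you package the same computation as a formal equivalence relation together with the locality observation that at $\epsilon=0$ a set is a hyperedge iff all its pairs are. Your framing is arguably cleaner: once $\approx$ is an equivalence, maximal hyperedges coincide with classes and the partition conclusion is immediate, whereas the paper's final jump from ``$\{x',x''\}$ is a hyperedge'' to ``$w_1\subseteq w_2$'' tacitly relies on the same transitivity you made explicit.
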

\begin{proof}
Without loss of generality, assume that $|w_1| \leq |w_2|$. We know from Def.~\ref{def: graph} that for any $w \in \Gamma(G)$, $x_1,x_2 \in w$ if and only if for all $y \in \mathcal{Y}$ either $f(x_1,y) = f(x_2,y)$, $p(x_1,y) = 0$, or $p(x_2,y) = 0$ holds. If $w_1$ is a singleton set then we are done since this implies $w_1 \subseteq w_2$ but since $w_1$, $w_2$ are maximal sets, we have $w_1 = w_2$. Now take the case when both $w_1, w_2$ are not singleton sets. Assume that $w_1 \neq w_2$, then there is $x', x'' \in \mathcal{X} \setminus x$ such that $x' \neq x''$, and $x' \in w_1, x' \notin w_2$, and $x'' \notin w_1, x'' \in w_2$.

For any $y \in \mathcal{Y}$, under Cond.~\ref{cond: 1} one of the following cases hold:
\begin{enumerate}
\item $f(x,y) = f(x',y) = f(x'',y)$.
\item If $f(x,y) \neq f(x',y)$ (or $f(x,y) \neq f(x'',y)$), then $p(x,y) = p(x',y) = 0$ (or $p(x,y) = p(x'',y) = 0$) by Cond.~\ref{cond: 1}.
\end{enumerate}

Hence, for all $y \in \mathcal{Y}$, we have $f(x',y) = f(x'',y)$ or $p(x',y) = 0$ or $p(x'',y) = 0$, which implies that $x'$ and $x''$ belong to the same hyperedge in $G$ by Def.~\ref{def: graph}. Thus, $x', x''$ belongs to the same maximal hyperedge which implies $w_1 \subseteq w_2$. But since $w_1$ and $w_2$ are maximal sets, it implies $w_1 = w_2$.
\end{proof}

Thm.~\ref{thm: unique_clustering} implies that whenever Cond.~\ref{cond: 1} holds, each $x \in \mathcal{X}$ belongs to exactly one maximal hyperedge in $\Gamma(G)$. Thus, hypergraph-based coding implies the following quantization$+$entropy coding scheme attains the optimal rate. Given any $x \in \mathcal{X}$, encode it using the unique hyperedge it belongs to and then use Slepian-Wolf coding to achieve the rate-distortion function. This scheme can be implemented in $\mathcal{O}(N\log{N})$ time since quantization can be performed in constant time and Slepian-Wolf coding can be implemented in $\mathcal{O}(N \log{N})$ time using polar codes \cite{Arikan2010} where $N$ is the blocklength.

Next consider the case when $\epsilon > 0$. Unlike for the case $\epsilon = 0$, when $\epsilon > 0$, even when $p(x,y) > 0$ for all $(x,y) \in \mathcal{X} \times \mathcal{Y}$ we might not have non-overlapping clustering, i.e.\ we can have vertices belonging to more than one maximal hyperedge.

\begin{figure}[H]
\begin{center}
\includegraphics{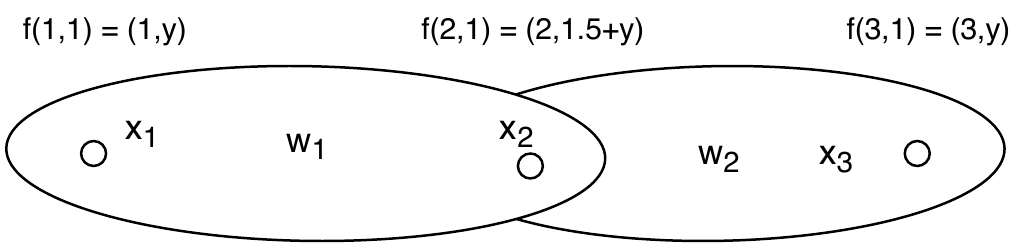}
\end{center}
\caption{The hypergraph consisting of vertex set $\mathcal{X} = \{x_1,x_2,x_3\}$ and hyperedges $w_1 = \{x_1, x_2\}$ and $w_2 = \{x_2,x_3\}$.}
\label{fig: counterexample_hypergraph.}
\end{figure}

\begin{example}
Let $X$ and $Y$ be independent uniform random variables defined on the support set $\mathcal{X} = \{1,2,3\}$ and $\mathcal{Y} = \{1,2\}$ respectively. Let $f: \mathcal{X} \times \mathcal{Y} \to \mathcal{Z}$, where $\mathcal{Z} \subset \mathbb{R}^2$, and $f$ be defined as $f(1,y) = (1,y)$, $f(2,y) = (2,1.5 + y)$, $f(3,y) = (3,y)$, and let $\epsilon = \frac{\sqrt{13}}{4}$. Then the characteristic hypergraph $G$ is as shown in Fig.~\ref{fig: counterexample_hypergraph.}. The hypergraph $G$ consists of three vertices $\{x_1,x_2,x_3\}$ and two maximal hyperedges $w_1$ and $w_2$. The smallest enclosing circle of the set of points $\{f(1,y), f(2,y)\}$ is centred at $(1.5,y + 0.75)$ and has a radius of $\epsilon = \frac{\sqrt{13}}{4}$. Hence, $w_1 = \{x_1,x_2\}$ forms a hyperedge of $G$. Similarly, $w_2 = \{x_2,x_3\}$ forms a hyperedge, but the smallest enclosing circle of $\{f(1,1), f(3,1)\}$ has a radius of $1$ which is greater than $\epsilon$ and hence $\{x_1,x_3\}$ does not form a hyperedge. Thus, we see that even though $p(x,y) > 0$ for all $(x,y) \in \mathcal{X}\times \mathcal{Y}$, $x_2$ belongs to two different maximal hyperedges $w_1$ and $w_2$.
\label{example: unique_clustering}
\end{example}
\subsection{Partially known functions}\label{II: unknown_functions}
Suppose there is no side information available at the decoder and the function $f$ is unknown to the encoder but it is known that $f$ is an $L$-Lipschitz continuous function. Then as a corollary of Thm.~\ref{thm: 2Delta}, we have the following result which may be of interest in several applications where the actual function is unknown or requires more computational resources than are available at the encoder. For instance if there is a good linear approximation to a computationally heavy function available to the encoder, the encoder might use the simpler function rather than the actual one.

\begin{corollary}
Let $f: \mathcal{X} \mapsto \mathcal{Z}$ be a $L$-Lipschitz continuous function. Then $R(\epsilon)$ can be upper-bounded as
$
R(\epsilon) \leq H_{G_{\epsilon / L}}(X),
$
where $G_{\epsilon / L}$ is constructed with respect to the random variable $X$ and the identity function and hence the upper-bound is achievable by the encoder even when $f$ is unknown. 
\label{cor: L-Lipschitz}
\end{corollary}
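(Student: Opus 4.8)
The plan is to reduce Cor.~\ref{cor: L-Lipschitz} to a set containment between the hyperedge families of two characteristic hypergraphs, invoke Thm.~\ref{thm: 2Delta}, and then observe that the encoding side of the resulting scheme never looks at $f$. Since no side information is present, $Y$ is constant, so Thm.~\ref{thm: 2Delta} gives $R(\epsilon)=H_{G_{f,\epsilon;X}}(X)$, which by \eqref{eqn: graph_entropy} equals $\min I(W;X)$ over $W$ with $X\in W\in\Gamma(G_{f,\epsilon;X})$; likewise $H_{G_{\epsilon/L}}(X)$ is the analogous minimum over $W$ with $X\in W\in\Gamma(G_{\mathrm{id},\epsilon/L;X})$, both minimizations being over random variables on the common vertex set $\mathcal{X}$ carrying the common law $P_X$. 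Hence it suffices to prove the hyperedge containment $\Gamma(G_{\mathrm{id},\epsilon/L;X})\subseteq\Gamma(G_{f,\epsilon;X})$: every $W$ feasible for the right-hand graph is then feasible for the left one as well, so enlarging the feasible family from $\Gamma(G_{\mathrm{id},\epsilon/L;X})$ to $\Gamma(G_{f,\epsilon;X})$ can only lower the optimal value, yielding $R(\epsilon)=H_{G_{f,\epsilon;X}}(X)\le H_{G_{\epsilon/L}}(X)$.

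To get the containment, I would fix a hyperedge $S\in\Gamma(G_{\mathrm{id},\epsilon/L;X})$ and, using Def.~\ref{def: graph} with the trivial $Y$, let $c$ be the center of the smallest enclosing circle of $\{x:x\in S\}$, so that $\norm{x-c}\le\epsilon/L$ for every $x\in S$. Extending $f$ to an $L$-Lipschitz map $\hat f$ on the ambient Euclidean space by Kirszbraun's theorem, one obtains $\norm{f(x)-\hat f(c)}=\norm{\hat f(x)-\hat f(c)}\le L\norm{x-c}\le\epsilon$ for all $x\in S$. Thus $\{f(x):x\in S\}$ lies inside the radius-$\epsilon$ ball centered at $\hat f(c)$, so its smallest enclosing circle has radius at most $\epsilon$, i.e.\ $S\in\Gamma(G_{f,\epsilon;X})$. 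This establishes the containment, hence the inequality $R(\epsilon)\le H_{G_{\epsilon/L}}(X)$.

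For the final clause I would argue that achievability is witnessed by running the hypergraph quantization-plus-Slepian--Wolf(entropy) coding scheme behind Thm.~\ref{thm: 2Delta} directly on $G_{\epsilon/L}$, which attains rate $H_{G_{\epsilon/L}}(X)$. That hypergraph is determined by $P_X$ and the threshold $\epsilon/L$ alone, so the encoder needs to know only $L$ and $\epsilon$, not $f$; the decoder, which knows $f$, maps a decoded hyperedge $w$ to $\hat f(c_w)$ (equivalently, to the center of the smallest enclosing circle of $\{f(x):x\in w\}$), which by the bound of the previous paragraph is within $\epsilon$ of $f(x)$ for every $x\in w$, so the fidelity constraint is met. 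This gives achievability by an encoder oblivious to $f$.

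The step I expect to be the crux is exactly this Lipschitz-extension move. The center $c$ of a hyperedge's smallest enclosing circle generally falls \emph{outside} the finite set $\mathcal{X}$, where $f$ is a priori undefined, so one must extend $f$ to evaluate it at $c$; moreover the extension must be \emph{norm-preserving} (Kirszbraun, rather than the coordinatewise McShane extension, which would inflate the Lipschitz constant by a factor $\sqrt{d}$) in order for a radius-$\epsilon/L$ ball for the identity function to be carried into a radius-$\epsilon$ ball for $f$. A cruder comparison that only tracks diameters, using $\mathrm{diam}\{f(x):x\in S\}\le L\,\mathrm{diam}\{x:x\in S\}$ together with Jung's theorem relating enclosing radius and diameter, would only yield $R(\epsilon)\le H_{G_{c\epsilon/L}}(X)$ with a constant $c<1$, not the stated trade-off.
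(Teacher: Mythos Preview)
Your proposal is correct and follows the same route as the paper: establish the hyperedge containment $\Gamma(G_{\mathrm{id},\epsilon/L;X})\subseteq\Gamma(G_{f,\epsilon;X})$ via the fact that an $L$-Lipschitz map sends a set of enclosing radius $r$ to one of enclosing radius at most $Lr$, and then appeal to Thm.~\ref{thm: 2Delta}. The paper's proof merely asserts this enclosing-radius lemma as ``the main idea'' without justification, whereas your Kirszbraun-extension argument supplies exactly the missing step (and your observation that a diameter-only bound via Jung's theorem would lose a dimension-dependent constant is on point---circumradius preservation for Lipschitz maps between Euclidean spaces is essentially equivalent to Kirszbraun's theorem).
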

\begin{proof}
The proof follows from Thm.~\ref{thm: 2Delta} and the properties of $L$-Lipschitz continuous functions. The main idea is that if a set of $k$ points $\{x_1, x_2, \ldots, x_k\}$ has a smallest enclosing circle of radius $r$, then the set of points $\{f(x_1), f(x_2), \ldots, f(x_k)\}$ will have a smallest enclosing circle of radius $r' \leq rL$.
\end{proof}

Now, consider the case where the encoder cannot compute the exact function $f$ but computes $g$, which is an $\delta$-approximation to $f$ as defined in Def.~\ref{def: delta}.
\begin{corollary}
Let $g$ be a $\delta$-approximation to $f$. If the encoder only has access to $g$, then  for $\epsilon > 2 \delta$, $R(\epsilon)$ can be upper-bounded as
$
R(\epsilon) \leq H_{G_{(\epsilon - 2 \delta)}}(X),
$
where $G_{(\epsilon - 2 \delta)}$ is constructed with respect to the random variable $X$ and $g$. Moreover, this upper bound is achievable.
\label{cor: Delta - 2 epsilon}
\end{corollary}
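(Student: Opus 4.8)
The plan is to reuse the achievability half of Thm.~\ref{thm: 2Delta}, applied to the surrogate function $g$ at the deflated fidelity $\epsilon-2\delta$, and then to charge the loss incurred by substituting $g$ for $f$ to a triangle inequality. Since $\epsilon>2\delta$, the number $\epsilon-2\delta$ is a strictly positive fidelity constraint, so the $\epsilon$-characteristic hypergraph $G_{(\epsilon-2\delta)}$ of $X$ with respect to $g$ (with no side information) is well defined by Def.~\ref{def: graph}. By Thm.~\ref{thm: 2Delta} applied to $g$, there is a sequence of $(2^{NR_N},N,\epsilon-2\delta)$ codes for computing $g(X)$ with $R_N\to H_{G_{(\epsilon-2\delta)}}(X)$ and $P^{avg}_{\epsilon-2\delta}(\hat Z,X)\to 0$; crucially the encoding and decoding maps of these codes are built only from $g$ and $P_X$, so the encoder, which has access to $g$, can implement the scheme.

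First I would record what the decoder of such a $g$-code outputs. Its reconstruction $\hat Z_i$ is, by construction, a function of the decoded message alone (equivalently, of the hyperedge in $\Gamma(G_{(\epsilon-2\delta)})$ assigned to the $i$th coordinate), and it satisfies $\norm{\hat Z_i-g(X_i)}\le\epsilon-2\delta$ on the asymptotically almost sure event that decoding succeeds: one may simply take $\hat Z_i$ to be the centre of the smallest enclosing circle of $\{g(x):x\in W_i\}$, which has radius at most $\epsilon-2\delta$ by Def.~\ref{def: graph} and which the decoder computes from $g$ alone in linear time. Note that, by Example~\ref{example: unique_clustering}, a vertex may belong to several maximal hyperedges once the fidelity is positive; this is harmless here, since I do not invoke unique clustering but only the general achievability statement Thm.~\ref{thm: 2Delta}.

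Next I would pass from $g$ to $f$. Since $g$ is a $\delta$-approximation to $f$ (Def.~\ref{def: delta}), on the success event above the triangle inequality gives, coordinatewise,
\begin{align*}
\norm{\hat Z_i-f(X_i)}\le\norm{\hat Z_i-g(X_i)}+\norm{g(X_i)-f(X_i)}\le(\epsilon-2\delta)+\delta=\epsilon-\delta\le\epsilon .
\end{align*}
Hence $\{\norm{\hat Z_i-f(X_i)}>\epsilon\}$ is contained in the decoding-failure event of the $g$-code, so $P^{avg}_{\epsilon}(\hat Z,X)\le P^{avg}_{\epsilon-2\delta}(\hat Z,X)\to 0$. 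Thus each $g$-code in the sequence is also a valid $(2^{NR_N},N,\epsilon)$ code for $f$, and letting $N\to\infty$ yields $R(\epsilon)\le H_{G_{(\epsilon-2\delta)}}(X)$.

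The only point that really needs care is the bookkeeping in the last two steps: the decoder's output must be producible without knowledge of $f$ (it is, depending only on $g$ and the decoded hyperedge), and the fidelity of the $g$-code must be deflated by a full $2\delta$ so that, after paying $\delta$ for the $f$–$g$ gap, the reconstruction still lies in the $\epsilon$-ball around $f(X_i)$ — in fact it lies in the $(\epsilon-\delta)$-ball, so the bound carries a little slack. I do not expect a genuine obstacle beyond this, since the heavy lifting is done by Thm.~\ref{thm: 2Delta}.
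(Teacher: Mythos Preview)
Your proof is correct and follows essentially the same route as the paper: both rely on the triangle inequality to convert a fidelity-$(\epsilon-2\delta)$ guarantee for $g$ into a fidelity-$\epsilon$ guarantee for $f$, with the paper phrasing this as ``if a set of points has smallest enclosing circle of radius $r$ with respect to $g$, then at most $r+2\delta$ with respect to $f$'' and you phrasing it directly at the level of the decoder's reconstruction. You even notice the $\delta$ of slack, which the paper's terse sketch does not remark on.
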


\begin{proof}
Since $g$ is a $\delta$-approximation to $f$, if a set of points have a smallest enclosing circle of radius $r$ with respect to $g$, then the same set of points must have a smallest enclosing circle of radius less than or equal to $r + 2 \delta$ with respect to $f$. Hence, constructing a graph with fidelity constraint $\epsilon - 2 \delta$ and the function $g$ ensures that the maximal distortion with respect to $f$ is less than or equal to $\epsilon$.
\end{proof}

Although Ex.~\ref{example: unique_clustering} shows that for $\epsilon > 0$, there can be overlapping clustering even when $X$ and $Y$ are independent random variables, in Sec.~\ref{subsec: prob_quant_polar} we will show that in cases where there is overlapping clustering we can still use a \emph{randomized} form of quantization followed by polar coding to attain the optimal rate.

\subsection{Quantization and universal source coding  for lossless coding for computing}\label{subsec: quant_usc}

When $X$ is independent of $Y$, we have $p(x,y) = p(x)p(y) > 0$ for all $(x,y) \in \mathcal{X}\times \mathcal{Y}$, where $\mathcal{X}$ and $\mathcal{Y}$ are the support set of $X$ and $Y$ respectively. Hence it follows from \eqref{eqn: graph_entropy} that $H_{G}(X|Y) = H(q(X))$, where $q(X)$ is the quantized value of $X$ corresponding to the unique maximal hyperedge that $X$ belongs to. Moreover, note that the function $q$ depends only on the function $f$ and not on the probability mass function of $X$, which implies that for a fixed function $f$ and any $p(x,y)>0$ such that $X$ is independent of $Y$, there is a universal source coding scheme that attains the functional $\epsilon$-entropy in \eqref{eqn: graph_entropy} which is the minimum number of bits the encoder needs to send when $\epsilon = 0$. We illustrate this case using an example.  

\begin{example}
Let $f(X,Y) = \arg \min_{i \in \{1,2\}} g(i, X, Y)$ where $g(1,x,y) = x^2 + 3y$, $g(2,x,y) = x + 2y^2$ as illustrated in Fig.~\ref{fig: lzw}.a, $\arg \min$ takes the minimum value of $i$ in case of equality, and take $\epsilon = 0$. Let $X$ and $Y$ be independent random variables, $P_X$ be the probability mass function of $X \in \mathcal{X} = \{1,2,3,4\}$, and $P_Y$ be the probability mass function of $Y \in \mathcal{Y} = \{1,2\}$. Since $X$ and $Y$ are independent random variables, Thm.~\ref{thm: unique_clustering} implies the optimal rate can be obtained by using quantization followed by universal source coding. The quantization scheme for the function $f$ can be found to be forming two clusters in the characteristic hypergraph, i.e.\ any $x \in \{1,2\}$ maps to one cluster, while $x \in \{3,4\}$ maps to a different cluster. Each row of Fig.~\ref{fig: lzw}.b shows the probability mass function on $\mathcal{X}$, their corresponding entropy $H(X)$, the optimal rate of functional compression for the function $f$, $H_G(X) = H(q(X))$, where $q$ is the quantization function mapping $x \in \{1,2\}$ to one value and $x \in \{3,4\}$ to another, and the rate observed by using LZW algorithm \cite{ZivL1978} for blocklength $10^5$. 
\label{example: indepnedent_sources}
\end{example}
\begin{figure}
\centering
 \begin{tabular}{lrr}
	\toprule
	\f{$X$}$\backslash Y$ & \f 1 & \f 2\\
	\midrule
    \f{1} & \f 2 & \f 2\\ 
    \f 2 & \f 2 & \f 2\\ 
    \f 3 & \f 1 & \f 2\\ 
    \f 4 & \f 1 & \f 2\\ 
	\bottomrule
\end{tabular} \ 
\hspace{3cm}
\begin{tabular}{lrrr}
	\toprule
	 $\f P_X$ & $\f H(X)$ & $\f H_G(X)$ & \f LZW rate\\
	\midrule
	$[\frac{1}{15}, \frac{4}{15}, \frac{8}{15}, \frac{2}{15}]$ & \f 1.64 &\f  0.92 &\f 1.06\\
	$[\frac{2}{17}, \frac{1}{17}, \frac{8}{17}, \frac{6}{17}]$ & \f 1.65 &\f  0.67 & \f 0.80\\
	$[\frac{1}{3}, \frac{1}{3}, \frac{1}{3}, \frac{1}{6}]$ & \f 1.95 & \f 0.99 & \f 1.14\\
	$[\frac{1}{6}, \frac{1}{6}, \frac{5}{12}, \frac{1}{4}]$ & \f 1.88 & \f 0.92 & \f 1.06\\
	\bottomrule
\end{tabular} \\ \vspace{1mm}
\hspace{-2.3cm} (a) \hspace{8cm} (b)
\caption{(a) Function $f(X,Y)$. (b) $H(X)$, the functional rate $H_G(X)$, and the observed rate using LZW algorithm for different $P_X$.}
\label{fig: lzw}
\end{figure}

\subsection{Randomized quantization and polar coding for computing}\label{subsec: prob_quant_polar}

In this subsection, we consider the case when there is no side information and $\epsilon > 0$. This can be easily extended to the case when $X$ is independent of $Y$ and the main idea of coding remains the same. Note that even in the absence of side information we might not have non-overlapping clustering in the hypergraph, i.e.\ a vertex might belong to more than one hyperedge, hence, quantization followed by universal source coding might not be optimal. This can be observed from Ex.~\ref{example: unique_clustering} with slight modification as well. In this subsection, we show that even when we do not have unique clustering in the hypergraph, we can have practical coding schemes using \emph{randomized quantization} and polar coding. To that end, we provide the following two-step algorithm for attaining the optimal rate asymptotically for any $\epsilon > 0$.
\begin{itemize}[leftmargin=*]
\item \emph{Randomized quantization:} We refer to the process of finding an auxiliary random variable $W$ as randomized quantization since unlike general rate distortion problems, the support set of $W$ is finite and can be determined directly from the corresponding characteristic hypergraph. Hence, every vertex of the hypergraph quantizes to the hyperedges associated with it in a randomized manner. Note that this process is different from random binning in the sense that random binning involves assigning bins to $N$-length sequences for a coding scheme with blocklength $N$, whereas, in randomized quantization we assign probabilities to single elements in $\mathcal{X}$.
Once we have formed the hypergraph, we need to optimize $I(W;X)$ over all conditional probabilities $p(w|x)$ such that $x$ lies in the hyperedge $w$. This is a convex optimization problem over finite variables and can be solved easily. Once we find a suitable $W$, we know from the proof of Thm.~\ref{thm: 2Delta} that we can find a function $g$ such that $\mathbb{E}[d_{\epsilon}(X,g(W))] = 0$, where $g(w)$ is the center of the smallest enclosing circle of the set of points $\{f(x): x \in w \text{ and } p(x)>0\}$. We will assume that $|\mathcal{W}| = 2$ for simplicity, which can be generalized to arbitrary finite-sized $\mathcal{W}$ using ideas from \cite{SasogluTA2009}.

\item \emph{Polar coding:} Once we have found $W$ corresponding to the optimal rate, the next step is to use polar codes to achieve a rate of $I(W;X)$. Define a distortion function $d(x,w) = d_{\epsilon}(x,g(w))$. Then for the chosen $W$, we have $\mathbb{E}[d(X,W)] = 0$. From Thm.~\ref{thm: polar_codes} there exists a set $\mathcal{I} \subset \{1, \ldots, N\}$ and frozen set $\mathcal{I}^c = \{1,\ldots,N\}\backslash \mathcal{I}$ such that $|\mathcal{I}| = NR > NI(W;X)$ and 
\[
Z(U_i|U_1^{i-1},X_1^N) \geq 1 - 2^{-N^{\beta}} \text{  or  } Z(U_i|U_1^{i-1}) \leq 2^{-N^{\beta}},
\]
for $\beta < 1/2$ and $N$ sufficiently large. The coding scheme follows the polar coding scheme for lossy compression \cite{HondaY2013} and is described next.

\emph{Codebook generation:} Let $\mathcal{L}_i$ be the family of functions $\lambda_i : \{0,1\}^{i-1} \to \{0,1\}$ and let $\lambda_{\mathcal{I}^c} \in \prod_{i \in \mathcal{I}^c} \mathcal{L}_i$ be shared between the encoder and the decoder. Later we will show that such set of functions $\lambda_{\mathcal{I}^c}$ exist that give us the desired rate and distortion.\\
\emph{Encoder:} For $i \in \mathcal{I}$, the encoder determines $u_i$ as follows
\[
u_i = \begin{cases}
0 \hspace{2em} \text{with probability} & P_{U_i|U_1^{i-1}}(0|u_1^{i-1},x_1^N)\\
1 \hspace{2em} \text{with probability} & P_{U_i|U_1^{i-1}}(1|u_1^{i-1},x_1^N),\\
\end{cases}
\]
and for $i \in \mathcal{I}^c$, the encoder determines $u_i$ as $u_i = \lambda_i(u_1^{i-1})$. The encoder sends $u_{\mathcal{I}}$ to the decoder. Hence the rate of coding is $|\mathcal{I}|/N$.\\
\emph{Decoder:} The decoder upon receiving $u_{\mathcal{I}}$, determines $u_{\mathcal{I}^c}$ as $u_i = \lambda_i(u_1^{i-1})$ for $i \in \mathcal{I}^c$ and outputs $\hat{w}_1^N = x_1^N G_N$.\\
\emph{Analysis:} We want $\mathbb{E}[d(X,\hat{W})] \to 0$ as $N \to \infty$ for some function $\lambda_{\mathcal{I}^c} \in \prod_{i \in \mathcal{I}^c} \mathcal{L}_i$. For a fixed $N$, the average distortion is given by
$
D_N(\lambda_{\mathcal{I}^c}) = \tfrac{\mathbb{E}[d^N(X^N, \hat{W}^N)]}{N},
$
where $d^N(x^N,w^N) = \sum_{i=1}^N d(x_i, w_i)$. From \cite[Thm.~4]{HondaY2013}, it directly follows that there exists a set of functions $\lambda_{\mathcal{I}^c} = \{\lambda_i \in \mathcal{L}_i\}_{i \in \mathcal{I}^c}$ such that $D_N = \mathcal{O}(2^{-N^{\beta'}})$ for some $\beta' < \beta < 1/2$. Thus, we have $D_N(\lambda_{\mathcal{I}^c}) \to 0$ as $N \to \infty$.
\end{itemize}

\section{Properties of $R(\epsilon)$}\label{sec: properties}

\tikzstyle{block_f_1} = [rectangle, draw,
    text width=8em, text centered, minimum height=2.5 em, node distance = 1.2cm]

\begin{figure}
\centering
\begin{tikzpicture}
\node(plot){\includegraphics[scale = 0.25]{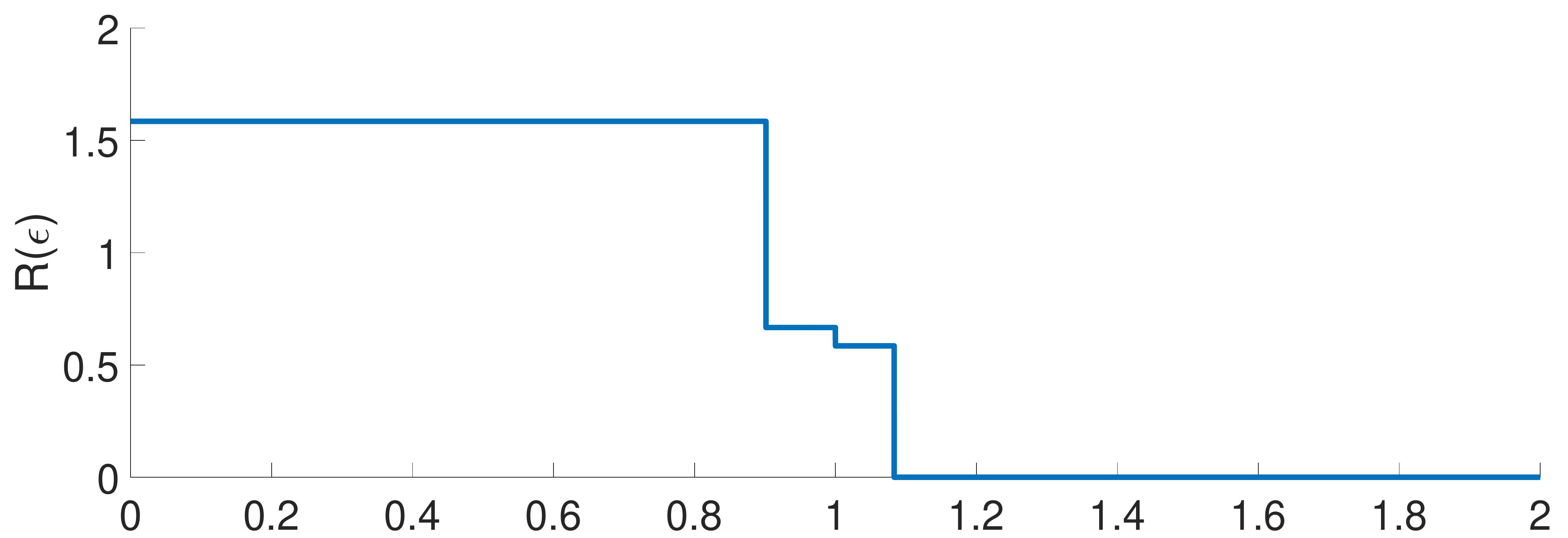}};
\hspace{-3cm}
\node[block_f_1, below = 0.15 cm of plot](h_1){\includegraphics[height = 2cm, width = 3cm]{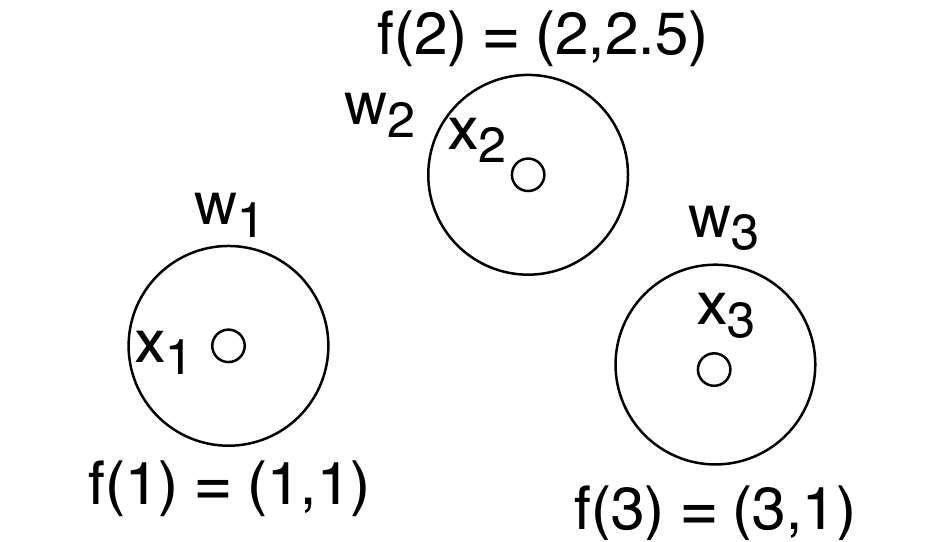}}; \hspace{-1cm}
\node[block_f_1, right = of h_1](h_2){\includegraphics[height = 2cm, width = 3cm]{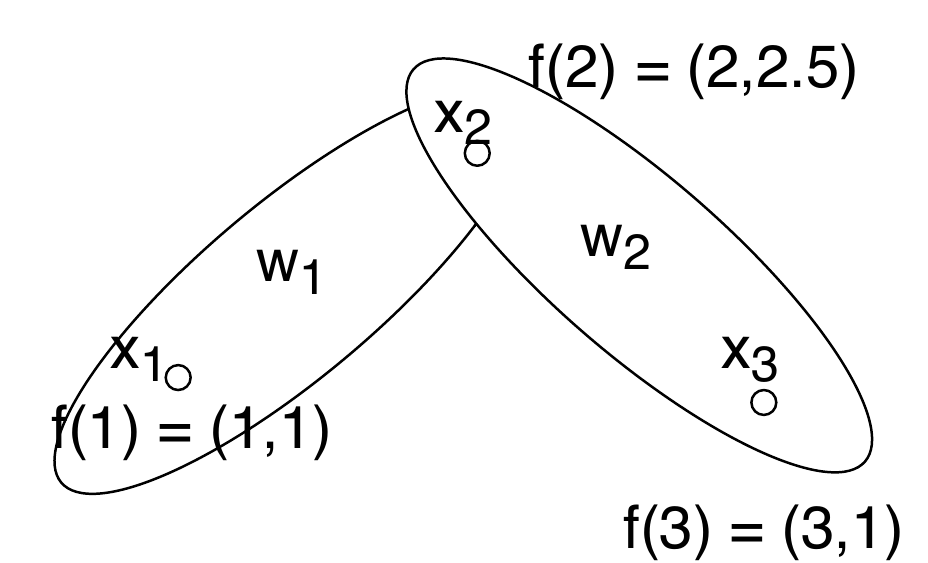}};\hspace{-1cm}
\node[block_f_1, right = of h_2](h_3){\includegraphics[height = 2cm, width = 3cm]{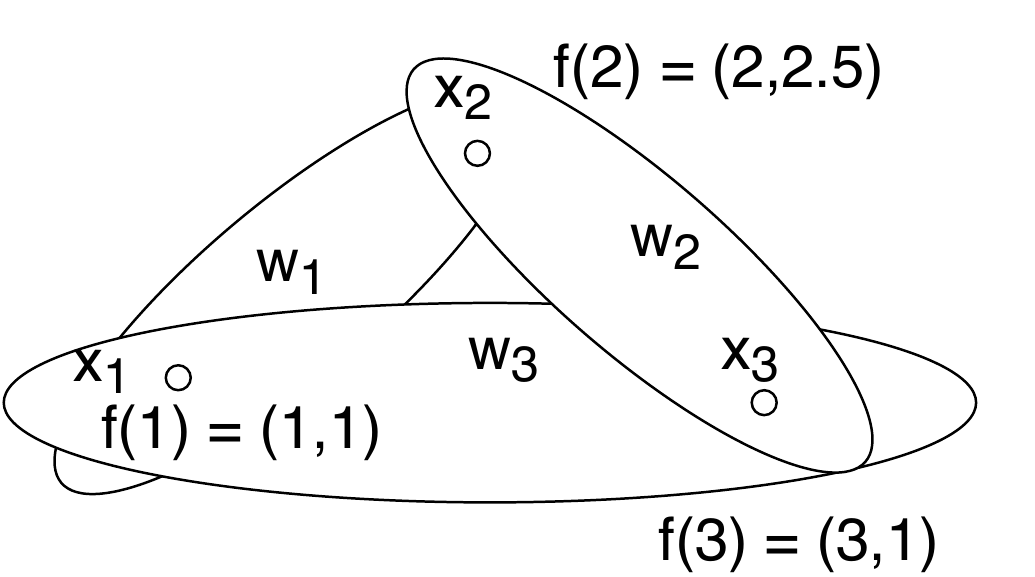}};\hspace{-1cm}
\node[block_f_1, right = of h_3](h_4){\includegraphics[height = 2cm, width = 3cm]{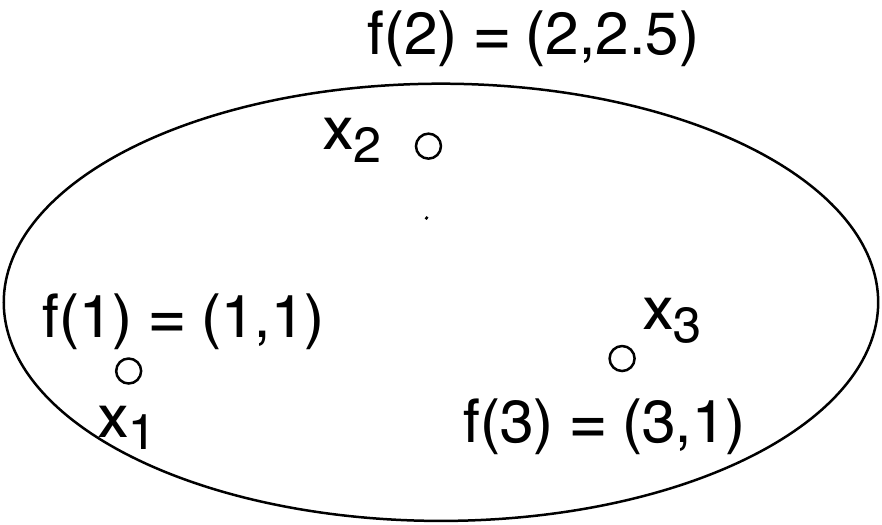}};

\draw[->] (h_1)+(3cm, 1.2cm) -- node[] { }+(4cm,4.1cm);
\draw[->] (h_2)+(1.7cm, 1.25cm) -- node[] { }+(1.28cm,2.77cm);
\draw[->] (h_3)+(0cm, 1.25cm) -- node[] {} +(-3cm,2.63cm);
\draw[->] (h_4)+(0.15cm, 1.2cm)  -- node[] {} +(-4.05cm,1.8cm);
\end{tikzpicture}

\begin{tabular}{r@{\hskip 2 cm}r@{\hskip 2 cm}r@{\hskip 2 cm}r}
\s{ $0 \leq \epsilon < \frac{\sqrt{13}}{4}$,} & \s{$\frac{\sqrt{13}}{4} \leq \epsilon < 1$,} &\s{$1 \leq \epsilon < \frac{13}{12}$,} & \s{$\frac{13}{12} \leq \epsilon$,}\\
\s{$H_{G_{\epsilon}}(X) = \log_2{3}$.} & \s{$H_{G_{\epsilon}}(X) = \frac{2}{3}$.} &\s{$H_{G_{\epsilon}}(X) = \log_2{3} - 1$.} & \s{$H_{G_{\epsilon}}(X) = 0$.}
\end{tabular}
\caption{$R(\epsilon)$ and corresponding hypergraphs.}
\label{fig: discontinuous_rate}
\end{figure}

Clearly $R(\epsilon)$ is a non-increasing function of $\epsilon$. In this section, we show $R(\epsilon)$ may be a discontinuous function of $\epsilon$ and hence from an operational point of view one must design codes with $\epsilon$ close to zero or the right of point of discontinuity for efficient compression algorithms.\cmmnt{Also, note that } We cannot use time-sharing to remove the discontinuity in $R(\epsilon)$ because we have considered maximal distortion\cmmnt{and not expected distortion}. Further, the discontinuity of $R(\epsilon)$ is not obvious from Thm.~\ref{thm: I:coding theorem} but only from the equivalent definition of $R(\epsilon)$ as $H_{G_{\epsilon}}(X)$ that the discontinuity and points of discontinuity can be observed.
We illustrate this property using the following example. 

\begin{example}
Consider the function $f: \mathcal{X} \to \mathcal{Z}$, where $\mathcal{Z} \subset \mathbb{R}^2$, and $f$ is defined as $f(1) = (1,1)$, $f(2) = (2,2.5)$, $f(3) = (3,1)$. Then for different values of $\epsilon$ we have different $G_{\epsilon}$ as illustrated in Fig.~\ref{fig: discontinuous_rate}. $R(\epsilon)$ depends on $G_{\epsilon}$ and hence, on increasing the value of $\epsilon$, the values of $\epsilon$ where $G_{\epsilon}$ changes are the points of discontinuity of $R(\epsilon)$ as illustrated in Fig.~\ref{fig: discontinuous_rate}. 
\label{example: discontinuous_rate}
\end{example}
\cmmnt{
\begin{figure}
\centering
\begin{subfigure}{.45\textwidth}
  \centering
  \includegraphics[width=0.8\linewidth]{u_discontinuous_rate_1}
  \caption{$0 \leq \epsilon < \frac{\sqrt{13}}{4}$, $H_{G_{\epsilon}}(X) = \log_2{3}$.}
  \label{fig:sub1}
\end{subfigure}%
\begin{subfigure}{.45\textwidth}
  \centering
  \includegraphics[width=0.8\linewidth]{u_discontinuous_rate_2}
  \caption{$\frac{\sqrt{13}}{4} \leq \epsilon < 1$, $H_{G_{\epsilon}}(X) = \frac{2}{3}$.}
  \label{fig:sub2}
\end{subfigure}

\begin{subfigure}{.45\textwidth}
  \centering
  \includegraphics[width=0.8\linewidth]{u_discontinuous_rate_3}
  \caption{$1 \leq \epsilon < \frac{13}{12}$, $H_{G_{\epsilon}}(X) = \log_2{3} - 1$.}
  \label{fig:sub3}
\end{subfigure}
\begin{subfigure}{.45\textwidth}
  \centering
  \includegraphics[width=0.65\linewidth]{u_discontinuous_rate_4}
  \caption{$\frac{13}{12} \leq \epsilon$, $H_{G_{\epsilon}}(X) = 0$.}
  \label{fig:sub4}
\end{subfigure}
\caption{The maximal hyperedges formed and the hypergraph entropy corresponding to the characteristic graph $G_{\epsilon}$ for different values of $\epsilon$ in Ex.~\ref{example: discontinuous_rate}.}
\label{fig:Delta_clustering}
\end{figure}
}

\section{Conclusion}\label{sec: conclusion}
This paper considers the problem of coding for computing with a fidelity constraint. The main insight regarding the solution of the problem is obtained by characterizing the rate as the conditional entropy of a hypergraph, which we call functional $\epsilon$-entropy. It is shown that the rate-distortion function for the problem is discontinuous with respect to the fidelity constraint. We also develop practical coding schemes for the problem and provide achievable bounds when the exact function is unknown to the encoder but an approximate function or a class to which the function belongs is known.
The rate provided in this paper for a maximal distortion $\epsilon$ can be seen as an upper-bound to the rate for the rate distortion problem with expected distortion $\epsilon$.
For future work, we want to provide stronger practically achievable bounds for the problem of coding for computing with expected distortion, since practical codes for this problem are still unknown.
\section*{Acknowledgement}\label{sec: acknowledgement}
We appreciate valuable discussions with Souktik Roy, Harshit Yadav, Aditya Deshmukh, Akshayaa Magesh, and Ishita Jain.
\Section{References}
\bibliographystyle{IEEEtran}
\bibliography{abrv,conf_abrv,lrv_lib}
\end{document}